\newcommand{\SISDR}{\ensuremath{\operatorname{SI-SDR}}\xspace}
\newcommand{\SDR}{\ensuremath{\operatorname{SDR}}\xspace}
\newcommand{\SIR}{\ensuremath{\operatorname{SIR}}\xspace}
\newcommand{\SAR}{\ensuremath{\operatorname{SAR}}\xspace}
\newcommand{\mireval}{\texttt{mir\_eval}\xspace}
\newcommand{\cisdr}{\texttt{ci\_sdr}\xspace}
\newcommand{\sigsep}{\texttt{sigsep}\xspace}
\newcommand{\solve}{\texttt{solve}\xspace}
\newcommand{\cgd}{\texttt{CGD10}\xspace}
\newcommand{\fpsingle}{{fp32}\xspace}
\newcommand{\fpdouble}{{fp64}\xspace}
\newcommand{\bsseval}{bss\_eval\xspace}
\newcommand{\half}{\nicefrac{1}{2}}
\providecommand{\vshat}{\hat{\vs}}
\providecommand{\starget}{\vs^{\text{target}}}
\providecommand{\einterf}{\ve^{\text{interf}}}
\providecommand{\eartif}{\ve^{\text{artif}}}
\begin{document}
\ninept

\title{SDR --- Medium rare with fast computations}

% Single address.
% ---------------
\name{Robin Scheibler%
\thanks{
The research presented in this paper is reproducible. Code and data are available at \protect\url{http://github.com/fakufaku/fast_bss_eval}.%
}}
\address{LINE Corporation}

% For example:
% ------------
%\address{School\\
%	Department\\
%	Address}
%
% Two addresses (uncomment and modify for two-address case).
% ----------------------------------------------------------
% \twoauthors
%  {A. Robin Scheibler, B. Author-two\sthanks{Thanks to XYZ agency for funding.}}
% 	{School A-B\\
% 	Department A-B\\
% 	Address A-B}
%  {C. Author-three, D. Author-four\sthanks{The fourth author performed the work
% 	while at ...}}
% 	{School C-D\\
% 	Department C-D\\
% 	Address C-D}

\maketitle

\begin{abstract}
  We revisit the widely used \bsseval metrics for source separation with an eye out for performance.
  We propose a fast algorithm fixing shortcomings of publicly available implementations.
  First, we show that the metrics are fully specified by the squared cosine of just two angles between estimate and reference subspaces.
  Second, large linear systems are involved.
  However, they are structured, and we apply a fast iterative method based on conjugate gradient descent.
  The complexity of this step is thus reduced by a factor quadratic in the distortion filter size used in \bsseval, usually 512.
  In experiments, we assess speed and numerical accuracy.
  Not only is the loss of accuracy due to the approximate solver acceptable for most applications, but the speed-up is up to two orders of magnitude in some, not so extreme, cases.
  We confirm that our implementation can train neural networks, and find that longer distortion filters may be beneficial.
\end{abstract}
\begin{keywords}%
source separation, performance evaluation, \bsseval, signal-to-distortion ratio, conjugate gradient descent
\end{keywords}

% To display textwidth and linewidth, uncomment package layouts 
% just above begin{document} and the two following lines
% textwidth: \printinunitsof{in}\prntlen{\textwidth}
% linewidth: \printinunitsof{in}\prntlen{\linewidth}

\section{Introduction}

Blind source separation (BSS) refers to a family of technique that can be used to recover signals of interests from their mixtures using only minimal prior information.
It has broad applications, but we focus on audio, e.g., for speech~\cite{makinoAudioSourceSeparation2018} and music~\cite{canoMusicalSourceSeparation2019} signals.
BSS comes in many flavors.
Deep neural networks (DNN) have been successfully used to separate multiple speakers from a single microphone's signal~\cite{hersheyDeepClusteringDiscriminative2016}.
Approaches based on independent component analysis work on the determined case where there as many sources as measurements~\cite{comonHandbookBlindSource2010}.
Convolutive mixtures, such as found in audio, may be handled by independent vector analysis (IVA)~\cite{kimIndependentVectorAnalysis2006a,hiroeSolutionPermutationProblem2006}.
Finally, overdetermined IVA tackles the case where redundant measurements are available~\cite{scheiblerIndependentVectorAnalysis2019,boeddeker_convolutive_2021}.
Performance evaluation is key to developing new algorithms and requires relevant metrics to be available.
For audio BSS, the \textit{\bsseval} metrics, i.e., signal-to-distortion, interference, and artifact ratios (SDR, SIR, and SAR, respectively), are a de facto standard~\cite{vincentPerformanceMeasurementBlind2006} and have been routinely used for the evaluation of new algorithms, e.g.~\cite{onoStableFastUpdate2011,hersheyDeepClusteringDiscriminative2016,scheiblerIndependentVectorAnalysis2019,stoterOpenUnmixReferenceImplementation2019}.
They decompose the estimated signals into orthogonal components corresponding to target sources and noise, as illustrated in \ffref{bsseval_illustration}.
Some amount of distortion in the estimate is allowed by forgiving a 512 taps filter.

It has been argued that these filters may actually be detrimental, especially for mask-based approaches~\cite{LeRoux:2018tq}.
As a countermeasure, the scale-invariant \SDR (\SISDR), i.e., the \SDR with a single tap filter, has been proposed~\cite{LeRoux:2018tq}.
Subsequently, it has been used for end-to-end training of separation networks~\cite{luoTaSNetTimeDomainAudio2018,scheiblerSurrogateSourceModel2021}.
However, the classic \bsseval \SDR has been recently vindicated and shown to outperform the \SISDR as a loss for end-to-end training of linear separation systems~\cite{boeddeker_convolutive_2021}.
Nevertheless, some computational challenges remain.
Computation of the optimal filters involves inversion of very large matrices, with cubic complexity using direct solvers.
This may be crippling if many short signals have to be evaluated, e.g. in utterance-level permutation invariant training (uPIT)~\cite{kolbaekMultitalkerSpeechSeparation2017}.
Furthermore, publicly available implementations, such as in \mireval, are not as performant as one would desire, leading to long computation times when applied to large datasets, especially when the number of channels increases.
For iterative methods~\cite{onoStableFastUpdate2011,scheiblerIndependentVectorAnalysis2019}, \bsseval metrics have to be evaluated at multiple iterations to assess convergence.

In this work, we propose a highly efficient algorithm for the computation of the \bsseval metrics, i.e. \SDR, \SIR, and \SAR.
% First, we show that all common implementations are sub-optimal and fail to fully take advantage of the orthogonality of the projected signals.
First, we provide a finer analysis of the definition of the metrics, leading to computational savings.
These are substantial as they reduce to a minimum steps dealing with the full length of the signals.
Second, to reduce computations for the distortion filters, we propose to use conjugate gradient descent (CGD)~\cite{chanConjugateGradientMethods1996}.
We implement the proposed algorithm in \texttt{pytorch}~\cite{paszkePyTorchImperativeStyle2019}, making it fully differentiable and GPU-enabled\footnote{At the tip of your fingers: \texttt{pip install fast\_bss\_eval}}.
We analyze the trade-off between numerical accuracy and speed in experiments on speech signals.
Our proposed implementation is orders of magnitude faster than publicly available ones.
The simplified steps provide savings for longer signals, while the CGD kicks in for more channels, or longer filters.
For \SDR only computations, we show up to $27\times$ speed-up for 8 channels and a 1024 taps filter compared to~\cite{boeddeker_convolutive_2021}.
We demonstrate successful training of a neural network for source separation using our implementation of the \SDR as the loss.
Interestingly, we find that doubling the length of the filters (\num{1024} taps) leads to further improvements.

\section{Background}

Vectors and matrices are represented by bold lower and upper case letters, respectively.
The norm of vector $\vx$ is written $\| \vx \| = (\vx^\top \vx)^{\half}$.
The convolution of vectors $\vx$ and $\vh$ is denoted $\vx \star \vh$.

We consider the case where we have $M$ estimated signals $\vshat_m$.
Each contains a convolutional mixture of $K$ reference signals $\vs_k$ and an additive noise component $\vb_m$,
\begin{align}
  \vshat_m = \sum\nolimits_{k}\vh_{mk} \star \vs_k + \vb_m, \quad m=1,\ldots,M,
\end{align}
where $\vshat_m$, $\vs_k$, and $\vb_m$ are all real vectors of length $T$.
The length of the impulse responses $\vh_{mk}$ is assumed to be short compared to $T$.
For simplicity, the convolution operation here includes truncation to size $T$.
In most cases, the number of estimates and references is the same, i.e. $M=K$.
We keep them distinct for generality.
% However, we consider the general case, which occurs for example in source extraction where we only want to estimate one or two out of many sources.

\subsection{\bsseval v3.0}

There exists a few variants of the \bsseval metrics~\cite{vincentPerformanceMeasurementBlind2006}, but we concentrate on the so-called v3.0.
It is the most recent and the one implemented in \mireval~\cite{raffelMirEvalTransparent2014} and \cisdr~\cite{boeddeker_convolutive_2021}.
In general, the matching of source estimates to their reference is not known and must be computed.
To this end, the metrics must be computed for each pair $(\vshat_m, \vs_k)$ before the best matching is found.

\bsseval decomposes the estimated signal as shown in \ffref{bsseval_illustration},
\begin{align}
  \starget_{km} = \mP_{k} \vshat_m, \quad
  \einterf_{km} = \mP \vshat_m - \mP_k \vshat_m, \quad
  \eartif_{km} = \vshat_m - \mP \vshat_m.
  \nonumber
\end{align}
Matrices $\mP_k$ and $\mP$ are projection matrices onto the $L$ shifts of $\vs_k$ and of all references, respectively.
Let $\mA_k \in \R^{(T + L - 1) \times L}$ contain the $L$ shifts of $\vs_k$ in its columns and $\mA = [\mA_1,\ldots,\mA_K]$,
then
\begin{align}
  \mP_k = \mA_k(\mA_k^\top \mA_k)^{-1}\mA_k^\top,\quad 
  \mP = \mA(\mA^\top \mA)^{-1}\mA^\top.
  \elabel{proj_matrices}
\end{align}
Then, SDR, SIR, and SAR, in decibels, are defined as follows,
\begin{align}
  \SDR_{km} & = 10\log_{10} \frac{\| \starget_{km} \|^2}{\|\einterf_{km} + \eartif_{km} \|^2}, \elabel{sdr} \\
  \SIR_{km} & = 10\log_{10} \frac{\| \starget_{km} \|^2}{\| \einterf_{km} \|^2}, \elabel{sir} \\
  \SAR_{km} & = 10\log_{10} \frac{\| \starget_{km} + \einterf_{km} \|^2}{\| \eartif_{m} \|^2}. \elabel{sar}
\end{align}
Finally, assuming for simplicity that $K=M$, the permutation of the estimated sources $\pi\,:\,\{1,\ldots,K\} \to \{1,\ldots,K\}$ that maximizes $\sum_k \SIR_{k\,\pi(k)}$ is chosen\footnote{\cisdr~\cite{boeddeker_convolutive_2021} uses the $\SDR$ since it does not compute the $\SIR$.}.

\begin{figure}
  \centering
  \definecolor{myblue}{rgb}{0.00392156862745098, 0.45098039215686275, 0.6980392156862745}
\definecolor{mygreen}{rgb}{0.00784313725490196, 0.6196078431372549, 0.45098039215686275}
\definecolor{myred}{rgb}{0.8352941176470589, 0.3686274509803922, 0.0}

\tdplotsetmaincoords{60}{105}

\pgfmathsetmacro{\x}{3}
\pgfmathsetmacro{\y}{3.5}
\pgfmathsetmacro{\z}{3.5}
\pgfmathsetmacro{\d}{0.3}

\begin{tikzpicture}[tdplot_main_coords]
  \footnotesize

  %\draw[thick,->] (0,0,0) -- (4.5,0,0) node[anchor=north west,xshift=-5mm]{$\mathcal{T}_k$: $k$th target subspace};
  %\draw[thick,->] (0,0,0) -- (0,4.5,0) node[anchor=north,align=center]{$\bigcup_{\ell\neq k} \mathcal{T}_\ell$\\interference subspace };
  %\draw[thick,->] (0,0,0) -- (0,0,1.8) node[anchor=west]{$\left(\bigcup_{\ell} \mathcal{T}_\ell\right)^\perp$: noise subspace};
  \draw[thick,->] (0,0,0) -- (3.8,0,0) node[anchor=north west,xshift=-5mm]{$k$th reference subspace};
  \draw[thick,->] (0,0,0) -- (0,4.5,0) node[anchor=north,align=left]{subspace of \\ other references};
  \draw[thick,->] (0,0,0) -- (0,0,1.5) node[anchor=west]{noise subspace};

  \coordinate (ori) at (0,0,0);
  \coordinate (P) at (\x,\y,\z);
  \coordinate (Pxy) at (\x,\y,0);
  \coordinate (Px) at (\x,0,0);
  \coordinate (Py) at (0,\y,0);
  \coordinate (Pz) at (0,0,\z);

  \draw[-{Latex[length=2mm]}] (ori) -- (P);
  \draw[dashed] (ori) -- (Pxy);
  \draw[dashed] (Py) -- (Pxy);
  % \draw[dashed] (Px) -- (P);
  \draw[-{Latex[length=2mm]}] (ori) -- (Px);
  \draw[-{Latex[length=2mm]}] (Px) --node[above] {$\einterf_{km}$} (Pxy);
  \draw[-{Latex[length=2mm]}] (Pxy) -- node[right,yshift=6mm] {$\eartif_m$} (P);

  \node[right] at (P) {$\vshat_m$};
  \node[right] at (Pxy) {$\mP \vshat$};
  \node[left] at (Px) {$\starget_{km} = \mP_k \vshat_m$};

  % Draw the angles
  \tdplotgetpolarcoords{\x}{\y}{\z};

  \tdplotdrawarc[thick,myblue]{(ori)}{0.9}{0}{\tdplotresphi}{below}{$\beta_{km}$}

  \tdplotsetthetaplanecoords{\tdplotresphi};
  \tdplotdrawarc[tdplot_rotated_coords,thick,myred]{(ori)}{0.9}{\tdplotrestheta}{90}{right,yshift=2mm}{$\gamma_{m}$};

  \tdplotdefinepoints(0,0,0)(\x,0,0)(\x,\y,\z)
  \tdplotdrawpolytopearc[thick,mygreen]{0.9}{anchor=east,xshift=-2mm}{\textcolor{mygreen}{$\alpha_{km}$}};

  % mark the right angles

  %\coordinate (Pxm1) at ($((\x)-(\d)$,0,0);
  \coordinate (Pxm1) at ($(\x,0,0)-(\d,0,0)$);
  \coordinate (Pxm3) at (\x,\d,0);
  \coordinate (Pxm2) at ($(\x,\d,0)-(\d,0,0)$);
  \draw (Pxm1) -- (Pxm2) -- (Pxm3);

  \coordinate (Pxym1) at ($(Pxy)-(0,\d,0)$);
  \coordinate (Pxym2) at ($(Pxy)+(0,-\d,\d)$);
  \coordinate (Pxym3) at ($(Pxy)+(0,0,\d)$);
  \draw (Pxym1) -- (Pxym2) -- (Pxym3);

  \coordinate (Pym1) at ($(Pxy)-(\d,0,0)$);
  \coordinate (Pym2) at ($(Pxy)+(-\d,0,\d)$);
  \coordinate (Pym3) at ($(Pxy)+(0,0,\d)$);
  \draw (Pym1) -- (Pym2) -- (Pym3);

\end{tikzpicture}
  \caption{Illustration of the decomposition operated by \bsseval.
    The estimated source $\vshat_m$ is decomposed into $\starget_{km}$, $\einterf_{km}$, and $\eartif_{km}$ by orthogonal projections onto the subspaces spanned by the $L$ shifts of $\vs_k$, of other references, and the noise, respectively.
    In \sref{base_metrics}, we show that \SDR, \SIR, and \SAR are uniquely determined by angles $\alpha_{km}$, $\beta_{km}$, and $\gamma_{km}$, respectively.
  }
  \flabel{bsseval_illustration}
\end{figure}

\subsection{Standard Implementations}

Publicly available implementations of \bsseval~\cite{vincentPerformanceMeasurementBlind2006,raffelMirEvalTransparent2014,boeddeker_convolutive_2021} all follow a fairly straightforward approach for the computations.
They all use a fixed or default value of $L=512$.
We start by defining the autocorrelation matrix of reference $\vs_k$, of all references, and the cross-correlation of $\vs_k$ and estimate $\vshat_m$,
\begin{align}
  \mR_k = \mA_k^\top \mA_k, \quad
  \mR = \mA^\top \mA, \quad
  \vx_{km} = \mA^\top \vshat_m,
  \nonumber
\end{align}
respectively.
Then, \SDR, \SIR, and \SAR are computed as follows.
\begin{enumerate}
  \item Compute $\mR$ and $\vx_{km}$ for all $k$ and $m$.
    The former is a block-Toeplitz matrix containing $\mR_k$ as its diagonal blocks.
    %Computations use the fast Fourier transform (FFT)~\cite{cooleyAlgorithmMachineCalculation1965} in $O(K^2 T\log T)$ and $O(KM T\log T)$, respectively.
    The complexity is $O(K^2 T\log T)$ and $O(KM T\log T)$, respectively, using the fast Fourier transform (FFT)~\cite{cooleyAlgorithmMachineCalculation1965}.
  \item Compute filters $\vh_{km}$ in $O(K L^3 + K M L^2)$ by solving
    \begin{align}
      \mR_k [\vh_{k1},\ldots,\vh_{kM}] = [\vx_{k1},\ldots,\vx_{kM}].
      \elabel{sdr_linear_system}
    \end{align}
  \item Compute filters $\vg_{k\ell}$ in $O(K^3 L^3 + M K^2 L^2)$ by solving
    \begin{align}
      \mR
      \begin{bmatrix} \vg_{11} & \cdots & \vg_{1K} \\ \vdots & \ddots & \vdots \\ \vg_{K1} & \cdots & \vg_{KK} \end{bmatrix}
      & =
      \begin{bmatrix} \vx_{11} & \cdots & \vx_{1M} \\ \vdots & \ddots & \vdots \\ \vx_{K1} & \cdots & \vx_{KM} \end{bmatrix}.
      \elabel{sir_linear_system}
    \end{align}
  \item Compute $\starget_{km} = \vh_{km} \star \vs_k$ in $O(K M T \log L)$.
  \item Compute $\vu_m = \sum_k \vg_{km} \star \vs_k$ in $O(K M T \log L)$.
  \item Compute $\einterf_{km} = \starget_{km} - \vu_m$ and $\eartif_{m} = \vshat_m - \vu_m$.
    Then, compute the $\SDR_{km}$, $\SIR_{km}$, and $\SAR_{km}$ according to \eref{sdr}, \eref{sir}, and \eref{sar}.
    The complexity is $O(KMT)$.
  \item Find the best permutation of the estimated sources in $O(M^3)$ by using the Hungarian algorithm\footnote{While the use of the Hungarian algorithm seems to have been rediscovered recently, it has long been applied in the BSS literature, e.g.,~\cite{drakeSoundSourceSeparation2002,ciaramellaAmplitudePermutationIndeterminacies2003,tichavskyOptimalPairingSignal2004}.}~\cite{kuhnHungarianMethodAssignment1955}.

\end{enumerate}

\section{Proposed Algorithm}

We identified the following inefficiencies in the implementation described above.
Components $\starget$, $\einterf$, and $\eartif$ do not need to be computed.
We show below that we can replace steps 4, 5, and 6 by a simpler computation.
These may not seem very computationally expensive, however, they operate on the full length of the input signals.
Audio signals may be long, e.g., \SI{30}{\second} sampled at \SI{16}{\kilo\hertz} is \num{480000} samples.
The linear systems \eref{sdr_linear_system} and \eref{sir_linear_system} are expensive to solve directly due to $L$ being typically large, e.g., $L=512$.
However, they are Toeplitz and block-Toeplitz, respectively, and efficient algorithms can be applied.
% 
% The storage in memory alone of matrices $\mR_k$ and $\mR$, especially the latter, may be problematic.

\subsection{Efficient Computation using Cosine Metrics}
\seclabel{base_metrics}

Since the \bsseval metrics are not sensitive to the scales of $\vs_k$ and $\vshat_m$, we will assume that all signals are scaled to have unit norm, i.e. $\|\vs_k\| = \|\vshat_m\| = 1$ for all $k,m$.
\begin{definition}[Cosine Metrics]
  We define the following new metrics,
  \begin{align}
    c_{km} & = \vshat_{m}^\top \mP_k \vshat_m = \vx_{km}^\top \mR_k^{-1} \vx_{km} = \vx_{km}^\top \vh_{km}, \\
    d_{m} & = \vshat_{m}^\top \mP \vshat_m = \vz_{m}^\top \mR^{-1} \vz_{m} = \sum\nolimits_k \vx_{km}^\top \vg_{km},
  \end{align}
  where $\vz_m = \mA^\top \vshat_m = [\vx_{1m}^\top,\ldots,\vx_{Km}^\top]^\top$.
\end{definition}

\begin{theorem}
  The \bsseval metrics can be computed as follows,
  \begin{align}
    \SDR_{km} & = f(c_{km}), \elabel{sdr_base} \\
    \SIR_{km} & = f(\nicefrac{c_{km}}{d_m}), \elabel{sir_base} \\
    \SAR_{m} & = f(d_m). \elabel{sar_base} 
  \end{align}
  where $f(x) = 10 \log_{10}\left(\frac{x}{1 - x}\right)$.
  \label{thm:bss_eval_from_base_metrics}
\end{theorem}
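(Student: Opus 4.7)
The plan is to exploit the fact that the three projections $\mP_k$, $\mP-\mP_k$, and $\mI-\mP$ are mutually orthogonal, which reduces the whole statement to a short calculation with inner products. The key algebraic fact I will use is that $\mathrm{range}(\mP_k) \subseteq \mathrm{range}(\mP)$, because $\mA_k$ is a column block of $\mA$. This gives $\mP\mP_k = \mP_k\mP = \mP_k$, from which $(\mP-\mP_k)$ and $(\mI-\mP)$ are themselves orthogonal projections and their ranges, together with that of $\mP_k$, are pairwise orthogonal. In particular, $\starget_{km}$, $\einterf_{km}$, and $\eartif_m$ are mutually orthogonal, so cross terms vanish in any squared-norm computation.

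Using this orthogonality, together with the standing normalization $\|\vshat_m\|=1$, I would compute the five squared norms appearing in \eref{sdr}--\eref{sar} as quadratic forms $\vshat_m^\top \mQ \vshat_m$ for the appropriate projection $\mQ$. Concretely, $\|\starget_{km}\|^2 = \vshat_m^\top \mP_k \vshat_m = c_{km}$, $\|\mP\vshat_m\|^2 = d_m$, and $\|\eartif_m\|^2 = \vshat_m^\top(\mI-\mP)\vshat_m = 1-d_m$. The orthogonality of the summands yields $\|\einterf_{km}+\eartif_m\|^2 = \|(\mI-\mP_k)\vshat_m\|^2 = 1-c_{km}$ and $\|\starget_{km}+\einterf_{km}\|^2 = \|\mP\vshat_m\|^2 = d_m$. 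Plugging these into \eref{sdr}--\eref{sar} gives the ratios $c_{km}/(1-c_{km})$, $c_{km}/(d_m-c_{km})$, and $d_m/(1-d_m)$, which, after factoring $d_m$ in the SIR case, are exactly $f(c_{km})$, $f(c_{km}/d_m)$, and $f(d_m)$.

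The remaining bookkeeping is to justify the three equivalent expressions in the definition of $c_{km}$ and $d_m$. This follows directly from the definitions in \eref{proj_matrices}: writing $\mP_k = \mA_k \mR_k^{-1} \mA_k^\top$ and letting $\vx_{km} = \mA_k^\top \vshat_m$ gives $c_{km} = \vx_{km}^\top \mR_k^{-1} \vx_{km}$, and since $\vh_{km} = \mR_k^{-1}\vx_{km}$ by \eref{sdr_linear_system}, the third form follows. The analogous identity for $d_m$ uses $\vz_m = \mA^\top \vshat_m = [\vx_{1m}^\top,\ldots,\vx_{Km}^\top]^\top$ (which is immediate from $\mA = [\mA_1,\ldots,\mA_K]$) and the block expansion of $\vg_m = \mR^{-1}\vz_m$ from \eref{sir_linear_system}.

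There is no real obstacle; the only place that needs a word of care is the nesting $\mathrm{range}(\mP_k) \subseteq \mathrm{range}(\mP)$ that unlocks the orthogonality of the three residual components. Once that is noted, everything else is a one-line substitution.
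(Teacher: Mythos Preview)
Your proposal is correct and follows essentially the same route as the paper: both hinge on the idempotency and self-adjointness of $\mP_k$ and $\mP$, the nesting $\mathrm{range}(\mP_k)\subseteq\mathrm{range}(\mP)$ (hence $\mP\mP_k=\mP_k$ and $\mP-\mP_k$ a projection), and the normalization $\|\vshat_m\|=1$ to reduce each ratio in \eref{sdr}--\eref{sar} to a function of $c_{km}$ and $d_m$. Your framing via the mutual orthogonality of $\mP_k$, $\mP-\mP_k$, $\mI-\mP$ is slightly more explicit than the paper's, and your added justification of the alternate forms of $c_{km}$ and $d_m$ is a nice bonus, but the substance is the same.
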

\begin{proof}
  The proof follows directly from properties of projection matrices, namely idempotency and self-adjointness.
  Given a real projection operator $\mPi$, these properties mean $\mPi \mPi = \mPi$ and $\mPi = \mPi^\top$, respectively.
  Further, $\mI - \mPi$ is also a projection matrix.
  %Applied to \eref{sdr}, we have
  Thus,
  \begin{align}
    \frac{\|\starget_{km}\|^2}{\|\einterf_{km} + \eartif_{km}\|^2}
    & =\frac{\|\mP_k \vshat_m\|^2}{\|(\mI - \mP_k) \vshat_m \|^2} 
      =\frac{\vshat_m^\top \mP_k \vshat_m}{\vshat_m^\top\vshat_m - \vshat_m^\top \mP_k \vshat_m}, \nonumber
  \end{align}
  and \eref{sdr_base} follows since $c_{km} = \vshat_m^\top \mP_k \vshat_m$ and we assumed $\|\vshat_k\| = 1$.
  From their definition in \eref{proj_matrices}, it is clear that the range space of $\mP$ contains that of $\mP_k$, and thus, $\mP \mP_k = \mP_k \mP = \mP_k$.
  This can be used to show that $\mP - \mP_k$ is also a projection matrix.
  Thus,
  \begin{align}
  \| \einterf_{km} \|^2 =\| (\mP - \mP_k) \vshat_m\|^2 = \vshat_m^\top \mP \vshat_m - \vshat_m^\top \mP_k \vshat_m,
  \end{align}
  and \eref{sir_base} follows by $d_m = \vshat_m^\top \mP \vshat_m$.
  Finally, it can be seen that $\starget_{km} + \einterf_{km} = \mP \vshat_k$ and thus \eref{sar_base} follows similarly.
\end{proof}

We can make a few observations.
Once the filters $\vh_{km}$ and $\vg_{km}$ have been computed, $c_{km}$ and $d_m$ only require an extra $O(KML)$ operations.
The SAR does not depend on the reference index $k$.
We call the cosine metrics thus because they are the squared cosine of angles $\alpha_{km}$ and $\gamma_m$ in \ffref{bsseval_illustration}.
Moreover, $\nicefrac{c_{km}}{d_m}$ is the square cosine of $\beta_{km}$.
This sheds light on the \SISDR that we now understand to be derived from the angle between the estimate and reference.

\subsection{Efficient Toeplitz Linear System Solver}
\seclabel{CGD}

We have established via Theorem~\ref{thm:bss_eval_from_base_metrics} that efficiently solving the Toeplitz and block-Toeplitz systems \eref{sdr_linear_system} and \eref{sir_linear_system} is key to the computations of the \bsseval metrics.
Direct solution by Gaussian elimination has cubic time in the matrix size. % and requires to instantiate and store the system matrices in memory.
However, highly efficient solvers are typically available in numerical linear algebra libraries such as BLAS and Lapack.
The celebrated Levinson-Durbin recursion~\cite{levinsonWienerRootMean1946} works in quadratic time, but is seldom available in libraries.
There exists also an even better alternative.

%An alternative is the Levinson-Durbin recursion~\cite{levinsonWienerRootMean1946} with quadratic time complexity.
%An extension to block-Toeplitz matrix also exists~\cite{akaikeBlockToeplitzMatrix1973} which decreases runtime by a factor $L$.
%It also does not requires to form the full matrices, but only store the first column and row of each Toeplitz matrix or sub-matrix.
%This decreases the memory requirements by a factor $L$.
%Such savings are are substantial since $L$ is typically large.
%The demerit is that optimized routines for these algorithms are generally not available and must be custom developed.
%In our experiments, an implementation in Python was slower than a direct solver, despite the lower computational complexity.

The CGD algorithm with a circulant preconditioner has complexity $O(L \log L)$ for an $L \times L$ Toeplitz system~\cite{chanConjugateGradientMethods1996}.
% In addition, it does not requires storing the full matrices, reducing memory usage by a factor $L$.
We briefly review the method here applied to solving \eref{sdr_linear_system}.
CGD only requires matrix-vector multiplication by the system matrix $\mR_k$.
For a Toeplitz matrix, such as $\mR_k$, this can be done in $O(L \log L)$ operations by leveraging the FFT.
Convergence of CGD is dictated by the distribution of eigenvalues of $\mR_k$, and can be improved by a preconditioner.
For example, the optimal circulant matrix $\mC_k$ minimizing $\| \mC_k - \mR_k \|_F^2$~\cite{chanOptimalCirculantPreconditioner1988}.
For symmetric $\mR_k$ with first column $\vr = [r_1,\ldots,r_{L}]^\top$, the first column of $\mC_k$ is given by
$(\mC_k)_{11} = r_1$,
\begin{align}
  (\mC_k)_{\ell 1} =  L^{-1} \left[ (L - \ell + 1) r_\ell + (\ell - 1) r_{L - \ell + 1} \right], \ \ell \geq 2.
  \elabel{opt_circ}
\end{align}
It has been shown that the eigenvalues of $\mC_k^{-1} \mR_k$ cluster around 1, and only a few iterations are required until convergence, independent of the matrix size.
Multiplication by $\mC_k^{-1}$ is done in $O(L \log L)$ time via the FFT.
For $K$ systems, the cost is thus $O(K L \log L)$.

For the block-Toeplitz system \eref{sir_linear_system}, we construct the preconditioner by replacing the Toeplitz blocks of $\mR$ by their optimal circulant approximation.
The formula is slightly different than~\eref{opt_circ} because the off-diagonal blocks are not symmetric.
It can be found in~\cite{chanOptimalCirculantPreconditioner1988}.
By applying the FFT, we obtain a block-diagonal matrix with $L$ $K \times K$ blocks that we invert with a direct solver.
This is a one time cost of $O(L K^3)$.
Matrix-vector multiplication by $\mR$ or the preconditioner requires $O(K^2 L \log L)$ operations using the FFT.
Thus, solving \eref{sir_linear_system} requires $O(K^3 L + K^2 L \log L)$, a substantial saving of $O(L^2)$ compared to the direct method.
We also found in practice that the solution of \eref{sdr_linear_system} provides a good initial value to solve \eref{sir_linear_system}.

\section{Experiments}

We assess the proposed implementation with respect to other publicly available Python implementations.
\textbf{\mireval}\footnote{\url{https://github.com/craffel/mir_eval}}~\cite{raffelMirEvalTransparent2014} is the most widely used implementation and is regression tested to ensure the same output as the original Matlab implementation\footnote{\url{http://bass-db.gforge.inria.fr/bss_eval/}}.
\textbf{\sigsep}\footnote{\url{https://github.com/sigsep/bsseval}} is a recent re-implementation of the \mireval implementation with a focus on performance.
%Both implement the \texttt{bss\_eval\_sources} function that computes \SDR, \SIR, and \SAR.
\textbf{\cisdr}\footnote{\url{https://github.com/fgnt/ci_sdr}} implements the \SDR only, but in a differentiable way, to be used to train neural networks.
Experiments where conducted on a Linux workstation with an Intel\textregistered\ Xeon\textregistered\ Gold 6230 CPU \@ \SI{2.10}{\giga\hertz} with 8 cores, an NVIDIA\textregistered\ Tesla\textregistered\ V100 graphical processing unit (GPU), and \SI{64}{\giga\byte} of RAM.
Throughout, we use \solve and \cgd to denote Gaussian elimination and 10 iterations of CGD, respectively.
We use \fpsingle and \fpdouble to mean single and double precision floating-point modes, respectively.

\textbf{Dataset} We use the dataset of reverberant noisy speech mixtures introduced in~\cite{scheiblerSurrogateSourceModel2021}.
The relative SNR of sources is selected at random in the range \SIrange{-5}{5}{\decibel}.
Speech and noise samples were selected from the WSJ1~\cite{wsj1} and CHIME3 datasets~\cite{barkerThirdCHiMESpeech2015}, respectively.
Noise is scaled to obtain a final SNR between \SIrange{10}{30}{\decibel}.
Mixtures contain two, three, and four sources, with an equal number of microphones.
For each number of sources, the dataset is split into training, validation, and test with \num{37416}, \num{503}, and \num{333} mixtures, respectively.

\begin{figure}[t!]
  \centering
  \includegraphics[width=\linewidth]{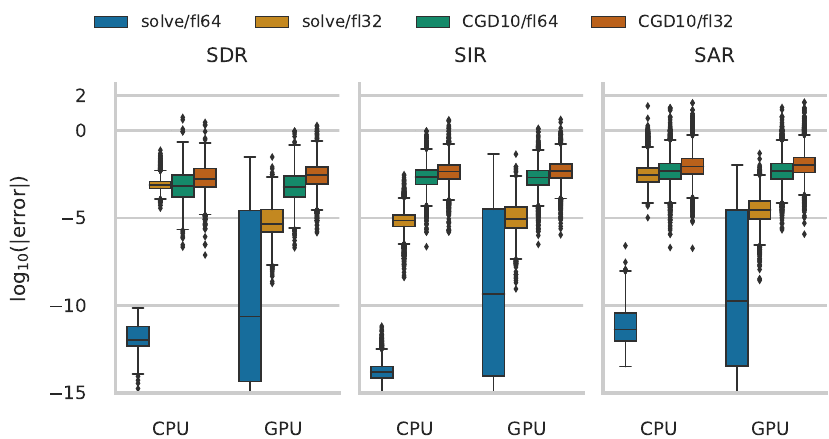}
  \caption{Box-plots of the log-absolute error relative to \mireval output.}
  \flabel{error_vs_mireval}
\end{figure}

\subsection{Evaluation on Speech Mixtures Dataset}
\seclabel{experiment_vs_mireval}

Our first experiment compares our proposed implementation to \mireval and \sigsep for the computation of \SDR, \SIR, and \SAR.
We use the test set, augmented by the output of separation by AuxIVA~\cite{onoStableFastUpdate2011}, doubling the number of samples.
%We use the test set of the above described dataset.
%We double the number of test samples available by separating each mixture with AuxIVA~\cite{onoStableFastUpdate2011}.
We measure the difference with respect to \mireval's output and the runtime.
%We omit \cisdr since it only computes the \SDR.

\ffref{error_vs_mireval} shows box-plots of the absolute difference with the output of \mireval.
Using \solve with \fpdouble shows very little difference with \mireval, less than \SI{e-6}{\decibel} in all cases on the CPU.
On the GPU, there is more variance, but, essentially, the error in decibels is negligible.
Switching to \fpsingle, the error is still small, between \SIrange{e-5}{e-3}{\decibel}, with the exception of \SAR on the GPU, where a few outliers exceed \SI{1}{\decibel}.
When the linear systems are solved approximately with \cgd, the median error is below \SI{e-2}{\decibel}.
There are some outliers with error above \SI{1}{\decibel}, more so for \SAR than other metrics.
We noted that the errors were zero-mean and thus do not impact averages over many samples.
%In addition, we empirically noticed that larger errors tended to occur for larger values of the \SAR, where this may not be so significant.
Using \fpsingle or \fpdouble did not make a big difference when using CGD.

\ffref{runtime} shows runtimes averaged over the whole dataset.
We show results using CPU  with 1 core (1CPU), 8 cores (8CPU), and GPU.
In all cases the proposed implementation brings significant speed-up compared to \mireval and \sigsep.
There is about one order of magnitude speed up for two and three channels, and two orders for four channels.
About one and two orders of magnitude difference for three and four channels, respectively.
Going to the GPU brings yet another major speed-up.
\mireval and \sigsep seem to benefit less from multiple cores, which we attribute to \texttt{numpy} being less efficient in this area than \texttt{pytorch}.
For two channels using 8 CPU cores, \solve was faster than CGD.
In other cases, CGD is two to three times faster.
This advantage is more salient on GPU.

\begin{figure}[t!]
  \centering
  \includegraphics[width=\linewidth]{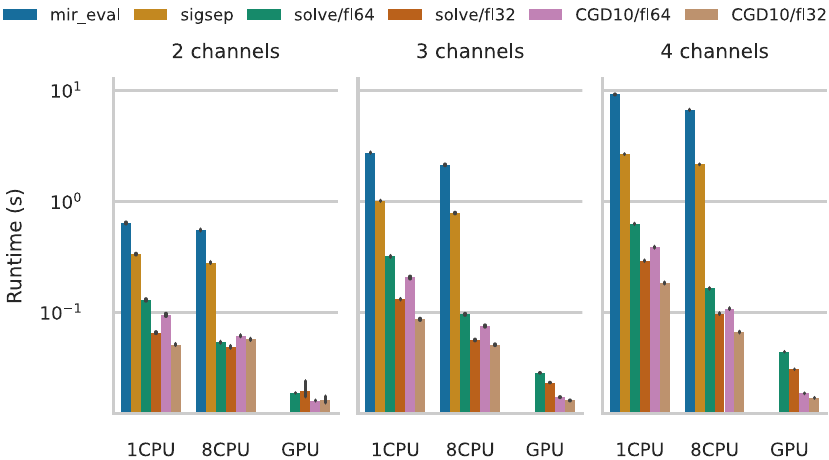}
  \caption{Average runtime (s) to compute \SDR, \SIR, and \SAR over the full test dataset. From left to right, 2, 3, and 4 channel signals. Within each subplot, we give runtimes for CPU with 1 core, 8 cores, and GPU.}
  \flabel{runtime}
\end{figure}

\subsection{Effects of Signal and Filter Length on Runtime}
\seclabel{experiment_runtime}

We analyze here the runtime behavior for \SDR only computation for varying signal and filter lengths, and number of channels.
We compare to the \cisdr toolbox~\cite{boeddeker_convolutive_2021} which provides an \SDR only implementation also based on \texttt{pytorch}.
The measurement are done by computing the \SDR for random signals on the GPU.
\tref{runtime_sdr_only_results} shows the average runtime of 10 measurements, each for a batch computation with 10 signals.
When using \solve, while the difference with \cisdr is small for two channels, the gap steadily widens with the number of channels.
The speed-up is also larger for longer signals, which is expected since the proposed implementation reduces computations involving the full length to a minimum.
Next, we confirm that using CGD leads to a large speed gain.
Most notably, doubling the filter size has no visible effect on the runtime for CGD, whereas it quadruples when using \solve.
The most extreme speed-ups occur for short signals and long filters, e.g., $10\times$ to $27\times$ depending on the number of channels.
This may be very valuable for methods where PIT is applied to many short signals, such as in uPIT~\cite{kolbaekMultitalkerSpeechSeparation2017}.

\begin{table}
  \caption{Runtime in \si{\milli\second} (speed-up) of the proposed method compared to the \cisdr~\cite{boeddeker_convolutive_2021} implementation. Runtimes are for batches of 10 signals on GPU.}
  %\resizebox{\linewidth}{!}{
  \footnotesize
  \begin{tabular*}{\linewidth}{@{}l@{\extracolsep{\fill}}lr@{~}lr@{~}lr@{~}lr@{~}l@{}}
    \toprule
    \multicolumn{2}{@{}l@{}}{Filter length}  & \multicolumn{4}{c}{512 taps} & \multicolumn{4}{c}{1024 taps} \\
    \midrule
    \multicolumn{2}{@{}l@{}}{Signal length}  & \multicolumn{2}{c}{\SI{5}{\second}} & \multicolumn{2}{c}{\SI{20}{\second}} & \multicolumn{2}{c}{\SI{5}{\second}} & \multicolumn{2}{c}{\SI{20}{\second}} \\
    %& {} & Runtime (ms) &      speedup & Runtime (ms) &      speedup & Runtime (ms) &       speedup & Runtime (ms) &      speedup \\
    %Ch. & Package &              &              &              &              &              &               &              &              \\
    \midrule
    2 ch. & \cisdr &         27 &  &         43 &  &         87 &   &        104 &  \\
          & \solve &         21 &  ($1\times$) &         26 &  ($2\times$) &         67 &   ($1\times$) &         71 &  ($1\times$) \\
          & \cgd &          9 &  ($3\times$) &         15 &  ($3\times$) &          9 &  ($10\times$) &         15 &  ($7\times$) \\
    3 ch. & \cisdr &         44 &  &         80 &  &        144 &   &        181 &  \\
          & \solve &         25 &  ($2\times$) &         38 &  ($2\times$) &         84 &   ($2\times$) &         97  & ($2\times$) \\
          & \cgd &         10 &  ($4\times$) &         22 &  ($4\times$) &         10 &  ($14\times$) &         23 &  ($8\times$) \\
    4 ch. & \cisdr &         72 &  &        136 &  &        233 &   &        297 &  \\
          & \solve &         29 &  ($2\times$) &         50 &  ($3\times$) &         92 &   ($3\times$) &        112 &  ($3\times$) \\
          & \cgd &         12 &  ($6\times$) &         33 &  ($4\times$) &         12 &  ($19\times$) &         33 &  ($9\times$) \\
    5 ch. & \cisdr &         97 &  &        196 &  &        330 &   &        427 &  \\
          & \solve &         34 &  ($3\times$) &         65 &  ($3\times$) &        101 &   ($3\times$) &        132 &  ($3\times$) \\
          & \cgd &         15 &  ($6\times$) &         46 &  ($4\times$) &         14 &  ($22\times$) &         46 &  ($9\times$) \\
    6 ch. & \cisdr &        129 &  &        268 &  &        447 &   &        590 &  \\
          & \solve &         41 &  ($3\times$) &         86 &  ($3\times$) &        119 &   ($4\times$) &        163 &  ($4\times$) \\
          & \cgd &         18 &  ($7\times$) &         62 &  ($4\times$) &         18 &  ($25\times$) &         62 &  ($9\times$) \\
    7 ch. & \cisdr &        168 &  &        386 &  &        584 &   &        778 &  \\
          & \solve &         48 &  ($3\times$) &        108 &  ($4\times$) &        134 &   ($4\times$) &        193 &  ($4\times$) \\
          & \cgd &         22 &  ($8\times$) &         82 &  ($5\times$) &         22 &  ($26\times$) &         82 &  ($9\times$) \\
    8 ch. & \cisdr &        208 &  &        462 &  &        741 &   &        987 &  \\
          & \solve &         54 &  ($4\times$) &        134 &  ($3\times$) &        148 &   ($5\times$) &        226 &  ($4\times$) \\
          & \cgd &         27 &  ($8\times$) &        106 &  ($4\times$) &         27 &  ($27\times$) &        106 &  ($9\times$) \\
      \bottomrule
  \end{tabular*}
%}
\tlabel{runtime_sdr_only_results}
\end{table}

\subsection{Training Neural Networks}
\seclabel{experiment_dnn}

Finally, we demonstrate the suitability of the proposed implementation for the training of separation networks.
We train neural source models for blind source separation using AuxIVA as described in~\cite{scheiblerSurrogateSourceModel2021}.
We compare the following loss functions, \SISDR, as in~\cite{scheiblerSurrogateSourceModel2021}, \SDR with $L=512$/\solve, as in~\cite{boeddeker_convolutive_2021},
 $L=512$/CGD10, and $L=1024$/CGD10.
The dataset is the one described above.
We use the reverberant image sources as target signals.
Algorithm and training details are described in~\cite{scheiblerSurrogateSourceModel2021}, but the DNN used is the one from~\cite{boeddeker_convolutive_2021}.
We train for 57 epochs with initial learning rate of $3\times 10^{-4}$.
Training is done on two sources mixtures, test on two, three, and four sources mixtures.

\tref{dnn_training_results} shows the test results for the models with smallest validation loss.
We evaluate in terms of \SDR ($L=512$/\solve) and word error rate (WER) of an ASR model pre-trained using the \texttt{wsj/asr1} recipe of ESPNet~\cite{Watanabe:2018gy}.
First, we note that using the approximate CGD solver has no effect on the final accuracy.
The results are in fact remarkably similar.
Then, as in~\cite{boeddeker_convolutive_2021}, we observe that allowing longer distortion filters with $L>1$ leading to improvements of both $\SDR$ and WER.
In fact, we observe that $L=512$ might still be too short as $L=1024$ leads to better performance.

% Uncomment the following to push the conclusion to the end of the page
%\vfill

\begin{table}
  \caption{Mean \SDR (dB) / WER (\%) for blind source separation with a neural source model trained with the \SDR as loss using different parameters.
    For evaluation, the \SDR uses filter length $L=512$ taps and \solve.
    %\SISDR corresponds to $L=1$.
  }
  %\resizebox{\linewidth}{!}{
  \footnotesize
  \begin{tabular*}{\linewidth}{@{}l@{\extracolsep{\fill}}rccc@{}}
      \toprule
      Solver & $L$ & 2 ch. & 3 ch. & 4 ch. \\
      \midrule
      %mix    &   -- & -- & -0.42 & 111.44 & -3.55 & 145.92 & -5.43 & 162.83 \\
      (\SISDR)              & 1    & 11.26 /  31.58 &  8.48 /  44.16 & 6.44 /  54.92 \\
      \solve  & 512  & 11.50 /  30.19 &  8.76 /  42.62 & 6.61 /  54.83 \\
      \cgd          & 512  & 11.50 /  30.18 &  8.76 /  42.65 & 6.61 /  54.88 \\
      \cgd          & 1024 & 11.60 /  29.42 &  8.95 /  41.95 & 6.92 /  51.47 \\
      \bottomrule
    \end{tabular*}
  %}
  \tlabel{dnn_training_results}
\end{table}

%\vfill

\section{Conclusion}

We introduced an improved algorithm to implement the widely used \bsseval metrics for blind source separation evaluation.
First, we reduce to a minimum computations that depend on the full length of input signals.
Second, we propose to use an iterative solver to find the optimal distortion filters.
We find very large runtime reductions that can potentially reduce the evaluation time from days to hours in BSS experiments.
The loss of accuracy due to the iterative solver does not impact average evaluation on datasets.
Furthermore, it opens the door to using longer distortion filters.
Experimental results suggest this may be beneficial to train separation networks.
It also makes it possible to use \bsseval on signals sampled at a higher frequency, e.g., \SI{44}{\kilo\hertz}.
We release our implementation as a Python package that can be used with both \texttt{numpy} and \texttt{pytorch}.

\clearpage

\section{References}

\printbibliography[heading=none]

\end{document}